\newtheoremstyle{exampstyle}
{1pt} 
{1pt} 
{} 
{} 
{\bfseries} 
{.} 
{.5em} 
{} 
\theoremstyle{exampstyle}
\newtheorem{metric}{Metric}
\newcommand{\nfp}{19\xspace}
\newcommand{\gplts}{418\xspace}
\newtheorem{theorem}{Theorem}
\newtheorem{definition}{Definition}
\begin{document}

%

\title{Towards a Benchmark and a Comparison Framework
\\ for Combinatorial Interaction Testing of \\
Software Product Lines}

\numberofauthors{6} 
%
\author{
%
%
\alignauthor
Roberto E. Lopez-Herrejon\\
       \affaddr{Johannes Kepler University Linz, Austria}\\
       \email{rlopez@jku.at}
\alignauthor
Javier Ferrer\\
       \affaddr{University of M\'alaga, Spain}\\
       \email{ferrer@lcc.uma.es}
\alignauthor
Francisco Chicano\\
       \affaddr{University of M\'alaga, Spain}\\
       \email{chicano@lcc.uma.es}
\and  
\alignauthor Evelyn Nicole Haslinger\\
       \affaddr{Johannes Kepler University Linz, Austria}\\
       \email{evelyn.haslinger@jku.at}
\alignauthor Alexander Egyed\\
       \affaddr{Johannes Kepler University Linz, Austria}\\
       \email{alexander.egyed@jku.at}
\alignauthor Enrique Alba\\
       \affaddr{University of M\'alaga, Spain}\\
       \email{eat@lcc.uma.es}
}

\maketitle

\begin{abstract}
As Software Product Lines (SPLs) are becoming a more pervasive development practice, their effective testing is becoming a more important concern.
In the past few years many SPL testing approaches have been proposed, among them, are those that support Combinatorial Interaction Testing (CIT) whose premise is to select a group of products where faults, due to feature interactions, are more likely to occur. Many CIT techniques for SPL testing have been put forward; however, no systematic and comprehensive comparison among them has been performed. To achieve such goal two items are important: a common benchmark of feature models, and an adequate comparison framework. In this research-in-progress paper, we propose 19 feature models as the base of a benchmark, which we apply to three different techniques in order to analyze the comparison framework proposed by Perrouin et al. We identify the shortcomings of this framework and elaborate alternatives for further study.

\end{abstract}




\keywords{Combinatorial Interaction Testing, Software Product Lines, Pairwise Testing, Feature Models}

\section{Introduction}
\label{sec:introduction}

A \textit{Software Product Line (SPL)} is a family of related software systems, which provide different feature combinations \cite{SPLE}. The effective management and realization of \textit{variability}  \--- the capacity of software artifacts to vary \cite{DBLP:journals/spe/SvahnbergGB05} \--- can lead to substantial benefits such as increased software reuse, faster product customization, and reduced time to market.

~\linebreak
Systems are being built, more and more frequently, as SPLs rather than individual products because of several technological and marketing trends.
This fact has created an increasing need for testing approaches that are capable of coping with large numbers of feature combinations that characterize SPLs. Many testing alternatives have been put forward~\cite{DBLP:journals/infsof/EngstromR11, DBLP:journals/infsof/NetoMMAM11,  DBLP:conf/splc/LeeKL12,DBLP:journals/sigsoft/MachadoMA12}. 
Salient among them are those that support \textit{Combinatorial Interaction Testing (CIT)} whose premise is to select a group of products where faults, due to feature interactions, are more likely to occur. In particular, most of the work has focused on pairwise testing whereby the interactions of two features are considered~\cite{DBLP:conf/icst/PerrouinSKBT10, DBLP:conf/splc/OsterMR10, GCD11, DBLP:conf/issre/HervieuBG11, DBLP:journals/sqj/LochauOGS12, DBLP:conf/models/CichosOLS11, DBLP:conf/caise/EnsanBG12}.
With all these pairwise testing approaches available the question now is: how do they compare?
To answer this question, two items are necessary: a common benchmark of feature models, and an adequate comparison framework.
In this research-in-progress paper, we propose a set of 19 feature models as a base of the comparison benchmark. In addition, we use these feature models to illustrate Perrouin et al.'s comparison framework~\cite{DBLP:journals/sqj/PerrouinOSKBT12}. We identify some shortcomings, and elaborate alternatives for further study.

The organization of the paper is as follows. In Section~\ref{sec:feature-models} we present the basic background on feature models. Section~\ref{sec:combinatorialtesting} describes the basic terminology of CIT and how it is applied to SPLs. Section~\ref{sec:benchmark} presents the list of feature models that we proposed as basic benchmark.
Section~\ref{sec:framework} summarizes and illustrates Perrouin et al.'s comparison framework. Section~\ref{sec:algorithms} sketches the three CIT algorithms used to illustrate both the benchmark and the comparison framework.
Section~\ref{sec:Evaluation} presents the results of our evaluation and its analysis. Section~\ref{sec:relatedwork} briefly summarizes the related work and Section~\ref{sec:conclusions} outlines the conclusions and future work.


\section{Feature Models and Running Example}
\label{sec:feature-models}

Feature models have become a \textit{de facto} standard for modelling the common and variable features of an SPL and their relationships collectively forming a tree-like structure. The nodes of the tree are the features, which are depicted as labelled boxes, and the edges represent the relationships among them.
Thus, a feature model denotes the set of feature combinations that the products of an SPL can have~\cite{KCH+90}.

Figure~\ref{fig:gpl-fm} shows the feature model of our running example, the \textit{Graph Product Line (GPL)}, a standard SPL of basic graph algorithms that has been extensively used as a case study in the product line community \cite{DBLP:conf/gcse/Lopez-HerrejonB01}.
A product has feature \texttt{GPL} (the root of the feature model) which contains its core functionality, and a driver program (\texttt{Driver}) that sets up the graph examples (\texttt{Benchmark}) to which a combination of graph algorithms (\texttt{Algorithms}) are applied. The types of graphs (\texttt{GraphType}) can be either directed (\texttt{Directed}) or undirected (\texttt{Undirected}), and can optionally have weights (\texttt{Weight}). Two graph traversal algorithms (\texttt{Search}) are available: either Depth First Search (\texttt{DFS}) or Breadth First Search (\texttt{BFS}). A product must provide at least one of the following algorithms: numbering of nodes in the traversal order (\texttt{Num}), connected components (\texttt{CC}), strongly connected components (\texttt{SCC}), cycle checking (\texttt{Cycle}), shortest path (\texttt{Shortest}), minimum spanning trees with Prim's algorithm (\texttt{Prim}) or Kruskal's algorithm (\texttt{Kruskal}).

\begin{figure}
\includegraphics[scale=0.3]{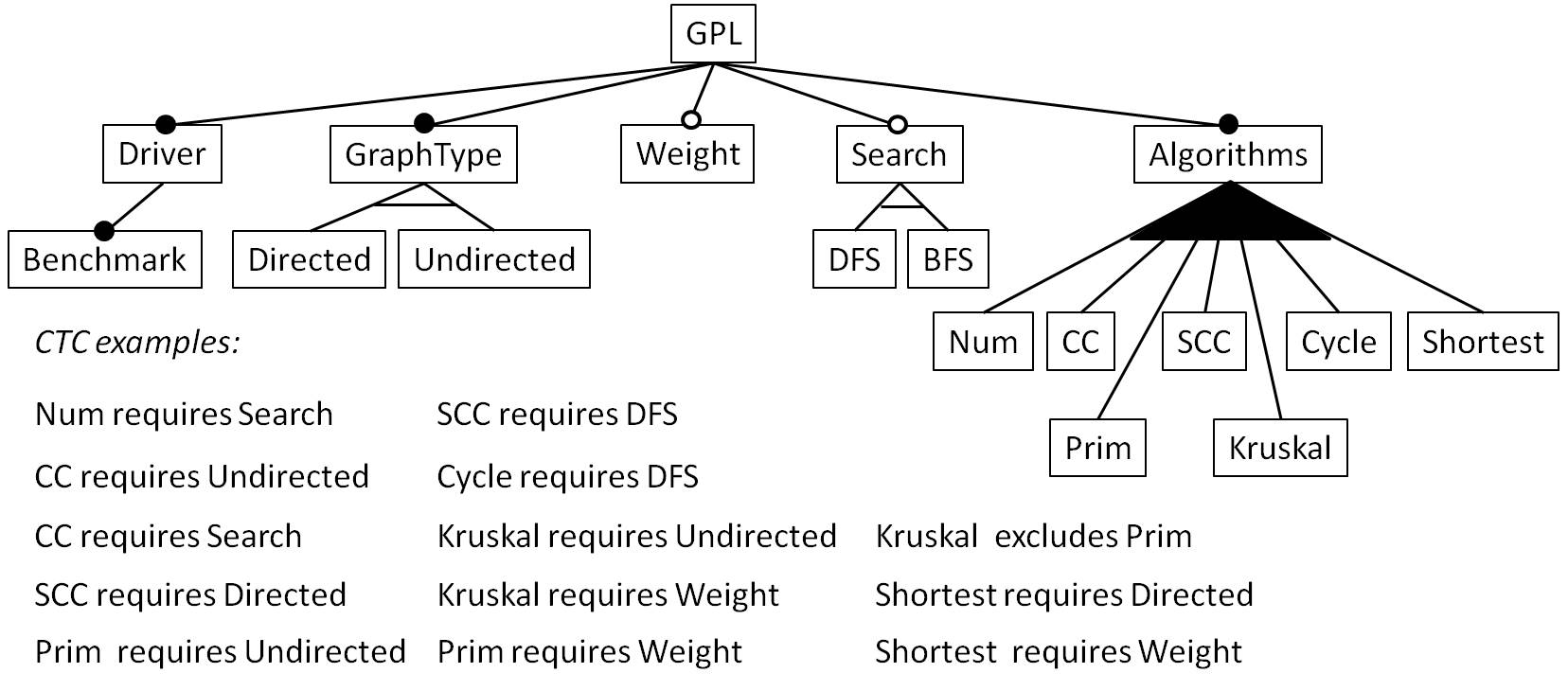}
\caption{Graph Product Line Feature Model}%
\label{fig:gpl-fm}
\end{figure}
In a feature model, each feature (except the root) has one parent feature and can have a set of child features. Notice here that a child feature can only be included in a feature combination of a valid product if its parent is included as well. The root feature is always included. There are four kinds of feature relationships:
\begin{itemize}
	\item \textit{Mandatory features} are depicted with a filled circle. A mandatory feature is selected whenever its respective parent feature is selected.  For example, features \texttt{Driver} and \texttt{GraphType}.
	
	\item \textit{Optional features} are depicted with an empty circle. An optional feature may or may not be selected if its respective parent feature is selected. An example is feature \texttt{Weight}.
	
	\item \textit{Exclusive-or relations} are depicted as empty arcs crossing over a set of lines connecting a parent feature with its child features. They indicate that exactly one of the features in the exclusive-or group must be selected whenever the parent feature is selected. For example, if feature \texttt{Search} is selected, then either feature \texttt{DFS} or feature \texttt{BFS} must be selected.
	
	\item \textit{Inclusive-or relations} are depicted as filled arcs crossing over a set of lines  connecting a parent feature with its child features. They indicate that at least one of the features in the inclusive-or group must be selected if the parent is selected. If for instance, feature \texttt{Algorithms} is selected then at least one of the features \texttt{Num}, \texttt{CC}, \texttt{SCC}, \texttt{Cycle}, \texttt{Shortest}, \texttt{Prim}, or \texttt{Kruskal} must be selected.
	
\end{itemize}
Besides the parent-child relations, features can also relate across different branches of the feature model with \textit{Cross-Tree Constraints (CTCs)}. Figure \ref{fig:gpl-fm} shows the CTCs of GPL.
For instance, \texttt{Cycle requires DFS} means that whenever feature \texttt{Cycle} is selected, feature \texttt{DFS} must also be selected. As another example, \texttt{Prim excludes Kruskal}  means that both features cannot be selected at the same time in any product.
These constraints as well as those implied by the hierarchical relations between features are usually expressed and checked using propositional logic, for further details refer to \cite{DBLP:journals/is/BenavidesSC10}. Now we present the basic definitions on which SPL testing terminology is defined in the next section.

\begin{table*}
\caption{Sample Feature Sets of GPL}
\center
\small
\begin{tabular}{|c|l|l|l|l|l|l|l|l|l|l|l|l|l|l|l|l|l|l|l|}
\hline
FS & GPL  & Dri & Gtp & W & Se & Alg & B & D & U & DFS & BFS & N  & CC & SCC & Cyc & Sh & Prim & Kru \tabularnewline
\hline
fs0 & \checkmark & \checkmark & \checkmark & \checkmark &  & \checkmark & \checkmark &  & \checkmark &  &  &  &  &  &  &  & \checkmark & \tabularnewline
\hline
fs1 & \checkmark & \checkmark & \checkmark & \checkmark & \checkmark & \checkmark & \checkmark &  & \checkmark & \checkmark &  &  & \checkmark &  &  &  &  & \checkmark \tabularnewline
\hline
fs2 & \checkmark & \checkmark & \checkmark &  & \checkmark & \checkmark & \checkmark & \checkmark &  & \checkmark &  & \checkmark &  &  & \checkmark &  &  & \tabularnewline
\hline
fs3 & \checkmark & \checkmark & \checkmark & \checkmark & \checkmark & \checkmark & \checkmark & \checkmark &  &  & \checkmark & \checkmark &  &  &  & \checkmark & & \tabularnewline
\hline
fs4 & \checkmark & \checkmark & \checkmark & \checkmark & \checkmark & \checkmark & \checkmark & \checkmark &  & \checkmark &  & \checkmark &  & \checkmark & \checkmark & \checkmark &  & \tabularnewline
\hline
fs5 & \checkmark & \checkmark & \checkmark & \checkmark & \checkmark & \checkmark & \checkmark &  & \checkmark & \checkmark &  & \checkmark & \checkmark &  & \checkmark &  & \checkmark & \tabularnewline
\hline
fs6 & \checkmark & \checkmark & \checkmark & \checkmark & \checkmark & \checkmark & \checkmark &  & \checkmark &  & \checkmark & \checkmark & \checkmark &  &  &  &  & \checkmark  \tabularnewline
\hline
fs7 & \checkmark & \checkmark & \checkmark & \checkmark & \checkmark & \checkmark & \checkmark &  & \checkmark & \checkmark &  & \checkmark & \checkmark &  & \checkmark &  &  & \tabularnewline

\hline
\end{tabular}
\\ Driver (Dri), GraphType (Gtp), Weight (W), Search (Se), Algorithms (Alg), Benchmark (B), Directed (D), Undirected (U), Num (N), Cycle (Cyc), Shortest (Sh), Kruskal (Kr).
\label{tab:gpl-featuresets}
\end{table*}

\begin{definition} \textit{Feature List (FL) is the list of features in a feature model.}
\end{definition}

\begin{sloppypar}
The \texttt{FL} for the GPL feature model is \texttt{[GPL, Driver, Benchmark, GraphType, Directed, Undirected, Weight, Search, DFS, BFS, Algorithms, Num, CC, SCC, Cycle, Shortest, Prim, Kruskal]}.
\end{sloppypar}

\begin{definition} \textit{A feature set, also called product in an SPL, is a 2-tuple {[}sel,$\overline{sel}${]}
where sel and $\overline{sel}$ are respectively the set of selected
and not-selected features of a member product\footnote{Definition based on \cite{DBLP:journals/is/BenavidesSC10}.}. Let FL be a feature
list, thus sel, $\overline{sel}$ $\subseteq$ FL, sel $\cap$ $\overline{sel}$
= $\emptyset$, and sel $\cup$ $\overline{sel}$ = FL. The terms
p.sel and p.$\overline{sel}$ respectively refer to the set of selected
and not-selected features of product p.}
\end{definition}

\begin{definition} \textit{A feature set \texttt{fs} is valid in feature model \texttt{fm}, iff \texttt{fs} does not contradict any of the constraints introduced by \texttt{fm}. We will denote with $FS$ the set of valid feature sets for a feature model (we omit the feature model in the notation).}
\end{definition}

\begin{sloppypar}
For example, the feature set \texttt{fs0={[}\{{GPL, Driver, GraphType, Weight, Algorithms, Benchmark,  Undirected, Prim}\}, \{Search, Directed, DFS, BFS, Num, CC, SCC, Cycle, Shortest, Kruskal\}{]}} is valid.
As another example, a feature set with features \texttt{DFS} and \texttt{BFS}
would not be valid because it violates the constraint of the exclusive-or relation which establishes that these two features cannot appear selected together in the same feature set. The GPL feature model denotes 73 valid feature sets, some of them depicted in Table~\ref{tab:gpl-featuresets}, where selected features are ticked (\checkmark) and unselected features are empty. 
\end{sloppypar}


\begin{definition} \textit{A feature \texttt{f} is a core feature if it is selected in all the valid feature sets of a feature model \texttt{fm}, and is a variant feature if it is selected in some of the feature sets.}
\end{definition}

\begin{sloppypar}
For example \texttt{GPL}, \texttt{Driver}, \texttt{Benchmark}, \texttt{GraphType} and \texttt{Algorithms} are core features and the remaining ones are variant features.
\end{sloppypar}


\section{Combinatorial Interaction Testing for Software Product Lines}
\label{sec:combinatorialtesting}
Combinatorial Interaction Testing (CIT) is a testing approach that constructs samples to drive the systematic testing of software system configurations \cite{DBLP:journals/tse/CohenDS08}. When applied to SPL testing, the idea is to select 
a representative subset of products where interaction errors are more likely to occur rather than testing the complete product family \cite{DBLP:journals/tse/CohenDS08}. 
In this section we provide the basic terminology of CIT within the context of SPLs.


\begin{definition} \textit{A t-set ts is a 2-tuple {[}sel,$\overline{sel}${]} representing a partially configured product, defining the selection of t features of the feature list FL, i.e. $ts.sel \cup ts.\overline{sel} \subseteq FL $ $\land$ $ts.sel \cap ts.\overline{sel}=\emptyset$ $\land$ $|ts.sel \cup ts.\overline{sel}| = t $. We say t-set ts is covered by feature set fs iff $ts.sel \subseteq fs.sel $ $\land$ $ ts.\overline{sel} \subseteq fs.\overline{sel}$.}
\end{definition}  

\begin{definition}  \textit{A $t$-set \texttt{ts} is valid in a feature model \texttt{fm} if there exists a valid feature set \texttt{fs} that covers \texttt{ts}.  The set of all valid $t$-sets for a feature model is denoted with $TS$\footnote{We also omit here the feature model in the notation for the sake of clarity.}.}
\end{definition}

\begin{definition} \textit{A t-wise covering array tCA for a feature model \texttt{fm} is a set of valid feature sets that covers all valid t-sets denoted by \texttt{fm}\footnote{Definition based on  \cite{JHF12}.}. We also use the term test suite to refer to a covering array.} 
\end{definition}

Let us illustrate these concepts for pairwise testing, meaning \texttt{t=2}. From the feature model in Figure \ref{fig:gpl-fm}, a valid 2-set is \texttt{[\{Driver\},\{Prim\}]}. It is valid because the selection of feature \texttt{Driver} and the non-selection of feature \texttt{Prim} do not violate any constraints. As another example, the 2-set \texttt{[\{Kruskal,DFS\}, $\emptyset$]} is valid because there is at least one feature set, for instance \texttt{fs1} in Table \ref{tab:gpl-featuresets}, where both features are selected. The 2-set \texttt{[$\emptyset$, \{SCC,CC\}]} is also valid because there are valid feature sets that do not have any of these features selected, for instance feature sets \texttt{fs0}, \texttt{fs1}, and \texttt{fs3}. Notice however that the 2-set \texttt{[$\emptyset$,\{Directed, Undirected\}]} is not valid. This is because feature \texttt{GraphType} is present in all the feature sets (mandatory child of the root) so either \texttt{Directed} or \texttt{Undirected} must be selected. In total, our running example has  $\gplts$ valid 2-sets.

Based on Table \ref{tab:gpl-featuresets}, the three valid 2-sets just mentioned above are covered as follows. The 2-set \texttt{[\{Driver\},\{Prim\}]} is covered by feature sets \texttt{fs1}, \texttt{fs2}, \texttt{fs3}, \texttt{fs4}, \texttt{fs6}, and \texttt{fs7}. Similarly, the 2-set \texttt{[\{Kruskal,DFS\}, $\emptyset$]} is covered by feature set \texttt{fs1}, and  \texttt{[$\emptyset$, \{SCC,CC\}]} is covered by feature sets \texttt{fs0}, \texttt{fs2}, and \texttt{fs3}.


\section{Basic Benchmark}
\label{sec:benchmark}


We propose the use of 19 \emph{realistic} feature models as a basis for a comparison benchmark. By realistic we mean that these models meet three basic requirements:
\begin{enumerate}

\item \textit{Available Source Code}. Because the ultimate goal of this line of research is to evaluate the effectiveness of the testing approaches, it is thus of the utmost importance that the source code associated to the proposed feature models be available in a complete form, although perhaps not be thoroughly documented.
%

\item \textit{Explicit Feature Model}. We consider feature models that are explicitly provided by the SPL authors. This requirement is to prevent any misunderstandings or omissions that any techniques to reverse engineering feature models from other artifacts can potentially have.

\item  \textit{Plausible number of products}. It does not take many features to create feature models with a huge number of potential products. We arbitrarily chose two million as the maximum number of products denoted by the feature models in the benchmark. We would argue this is a reasonable number of products that a large company or open source community could potentially maintain and most importantly thoroughly test.

\end{enumerate}

In order to find the feature models that meet these criteria we searched  proceedings from SPL-related venues such as SPLC, Vamos, ICSE, ASE, and FSE published over the last five years. In addition, we consulted the following websites and repositories: SPL Conqueror~\cite{DBLP:journals/infsof/SiegmundRKGAK13}, FeatureHouse~\cite{FeatureHouse}, SPL2go~\cite{SPL2GO}, and SPLOT~\cite{SPLOT}\footnote{Search performed during August-September 2013.}. Table~\ref{tab:feature-models} summarizes the feature models used in our evaluation. It shows the number of features, number of products, and their application domain with the reference from where they were obtained.

We should point out that some of the pairwise testing approaches identified and mentioned in Section~\ref{sec:relatedwork} already use some examples from the SPLOT website; however, to the best of our knowledge, the criteria for their   feature models selection is not specified precisely. In our experience with this repository, based on the information provided by the model authors on the SPLOT website itself, we either were not able to trace the code sources of the feature models or we found semantic mistakes in them.

We should stress that this list of feature models is by no means complete. Our expectation, as a result of this paper, is that the SPL community proposes new feature models to add or remove to this benchmark, perhaps filling in details that were not found by our search, and adding or refining our selection criteria.

\begin{table}  
\centering
\scriptsize
\begin{tabular}{|l|r|r|l|}
\hline
Feature Model & NF  & NP &  Domain\\ \hline
Apache  & 10  & 256 & web server~\cite{DBLP:journals/infsof/SiegmundRKGAK13}\\ \hline
argo-uml-spl  & 11  & 192 &  UML tool~\cite{ArgoUML}\\ \hline
BDB*  & 117  & 32 &  database~\cite{FeatureHouse}\\ \hline  
BDBFootprint  & 9  & 256  &  database~\cite{DBLP:journals/infsof/SiegmundRKGAK13}\\ \hline
BDBMemory  & 19  & 3,840 &  database~\cite{DBLP:journals/infsof/SiegmundRKGAK13}\\ \hline
BDBPerformance  & 27  & 1,440 &  database~\cite{DBLP:journals/infsof/SiegmundRKGAK13} \\ \hline
Curl  & 14  & 1024 & data trasfer~\cite{DBLP:journals/infsof/SiegmundRKGAK13}\\ \hline
DesktopSearcher  & 22  & 462 & file search~\cite{SPL2GO}\\ \hline
fame$\_$dbms$\_$fm  & 20  & 320 & database~\cite{SPL2GO} \\ \hline
gpl  & 18  & 73 & graph algorithms~\cite{DBLP:conf/gcse/Lopez-HerrejonB01}\\ \hline
LinkedList  & 27  & 1,344 & data structures~\cite{DBLP:journals/infsof/SiegmundRKGAK13}\\ \hline
LLVM  & 12  & 1,024 & compiler library~\cite{DBLP:journals/infsof/SiegmundRKGAK13}\\ \hline
PKJab  & 12  & 72 & messenger~\cite{DBLP:journals/infsof/SiegmundRKGAK13} \\ \hline
Prevayler  & 6  & 32 & object persistence~\cite{DBLP:journals/infsof/SiegmundRKGAK13} \\ \hline
SensorNetwork  & 27  & 16,704 &  networking~\cite{DBLP:journals/infsof/SiegmundRKGAK13} \\ \hline
TankWar  & 37  & 1,741,824 & game~\cite{FeatureHouse}\\ \hline
Wget  & 17  & 8,192 & file retrieval~\cite{DBLP:journals/infsof/SiegmundRKGAK13}\\ \hline
x264  & 17  & 2,048 & video encoding~\cite{DBLP:journals/infsof/SiegmundRKGAK13}\\ \hline
ZipMe  & 8  & 64 & data compression~\cite{DBLP:journals/infsof/SiegmundRKGAK13}\\ \hline
\end{tabular}
\\
\textbf{NF}: Number of Features, \textbf{NP}:Number of Products, 
\\ *\textbf{BDB} prefix standards for Berkeley database. 		
\caption{Feature Models Summary}
\label{tab:feature-models}
\end{table}


\section{Comparison Framework}
\label{sec:framework}

In this section we present the four metrics that constitute the framework proposed by Perrouin \emph{et al.} for the comparison of pairwise testing approaches for  SPLs \cite{DBLP:journals/sqj/PerrouinOSKBT12}. We define them based on the terminology presented in Sections~\ref{sec:feature-models} and~\ref{sec:combinatorialtesting}.
For the following metric definitions, let $tCA$ be a $t$-wise covering array of feature model \texttt{fm}. The corresponding equations are shown in Figure~\ref{fig:metrics-summary}.

\begin{metric} \textit{Test Suite Size} is the number of feature sets selected in a covering array for a feature model, shown in Equation~(\ref{eq:suitesize}).
\end{metric}

\begin{metric} \textit{Performance} is the time required for an algorithm to compute a covering array.
\end{metric}

\begin{metric} \textit{Test Suite Similarity}. This metric is defined based on Jaccard's similarity index and applied to variant features. Let $FM$ be the set of all possible feature models, $fs$ and $gs$ be two feature sets in $FS$, and $var:FS \times FM \rightarrow FL$ be an auxiliary function that returns the selected variant features of a feature set according to a FM. The similarity index of two feature sets is thus defined in Equation~(\ref{eq:similarity}), and  the similarity value for the entire covering array is defined by Equation~(\ref{eq:testsuitesimilarity}).

\end{metric}


It should be noted here that the second case of the similarity index, when there are no variant features on both feature sets, is not part of the original proposed comparison framework \cite{DBLP:journals/sqj/PerrouinOSKBT12}. We added this term because in our search we found feature sets formed only with core features.


\begin{metric} \textit{Tuple Frequency}. Let $occurrence:TS \times 2^{FS} \rightarrow \mathbb{N}$  be an auxiliary function that counts the occurrence of a $t$-set (a tuple of t elements) in all the feature sets of a covering array of a single feature model. The metric is defined in Equation~(\ref{eq:tuplefrequency}).
\end{metric}



The first two metrics are the standard measurements used for comparison between different testing algorithms, not only within the SPL domain. To the best of our understanding, the intuition behind the Test Suite Similarity is that the more dissimilar (value close to 0) the feature sets are, the higher chances to detect any faulty behaviour when the corresponding $t$-wise tests are instrumented and performed. Along the same lines, the rationale behind tuple frequency is that by reducing this number, the higher the chances of reducing the repetition of executions of $t$-wise tests.

Let us provide some examples for the latter two metrics for the case of pairwise testing, \texttt{t=2}. Consider for instance, feature sets \texttt{fs0}, \texttt{fs1}, \texttt{fs2 }and \texttt{fs7} from Table \ref{tab:gpl-featuresets}. The variant features in those feature sets are:
\begin{align*}
 var(fs0,gpl)&= \{Undirected, Weight, Prim \} \\
 var(fs1,gpl) &= \{Undirected, Weight, Search, DFS, \\
              & Connected, Kruskal\} \\
 var(fs2,gpl) &= \{Directed, Search, DFS, Number, Cycle\} \\
 var(fs7,gpl) &= \{Undirected, Weight, Search, DFS, \\
              & Connected, Number, Cycle\}
\end{align*}

\begin{sloppypar}
An example is the similarity value between feature sets \texttt{fs0} and \texttt{fs2}, that is  $Sim(fs0,fs2,gpl)=0/8=0.0$. The value is zero because those two feature sets do not have any selected variant features in common.
Now consider $Sim(fs1,fs7,gpl)=5/8=0.625$ which yields a high value because those feature sets have the majority of their selected features in common.
\end{sloppypar}

\begin{sloppypar}
For sake of illustrating the Tuple Frequency metric, let us assume that  the set of feature sets in Table \ref{tab:gpl-featuresets} is a 2-wise covering array of GPL denoted as \texttt{tCA$_{gpl}$}\footnote{There are 24 2-wise pairs, out of the 418 pairs that GPL contains, which are not covered.}. For example, the 2-set
\texttt{ts0} \texttt{=} \texttt{[\{Driver\},\{Prim\}]} is covered by feature sets \texttt{fs1}, \texttt{fs2}, \texttt{fs3}, \texttt{fs4}, \texttt{fs6}, and \texttt{fs7}. Thus, its frequency is equal to $occurrence(ts0,tCA_{gpl})/8 = 6/8 = 0.75$.
As another example, the 2-set \texttt{ts1} \texttt{=} \texttt{[\{Kruskal,DFS\}, $\emptyset$]} is covered by feature set \texttt{fs1}. Thus its frequency is equal to $occurrence(ts1,tCA_{gpl})/8 = 1/8 = 0.125$.
\end{sloppypar}

\begin{figure*}
\caption{Framework Metrics Summary}
\label{fig:metrics-summary}
\normalsize
\hrulefill

\begin{equation}
\label{eq:suitesize}
TestSuiteSize = |tCA|
\end{equation}

\begin{equation}
\label{eq:similarity}
Similarity(fs, gs, fm) =
  \begin{cases}
   \frac{\mid var(fs,fm)~\cap~var(gs,fm) \mid}{\mid var(fs,fm) ~\cup~ var(gs,fm) \mid}  & \text{if }  var(fs,fm)\cup var(gs,fm) \neq \emptyset \\
 0 & \text{otherwise}
  \end{cases}
\end{equation}

\begin{equation}
\label{eq:testsuitesimilarity}
 TestSuiteSimilarity(tCA, fm) = \frac{\sum_{fsi \in tCA} \sum_{fsj \in tCA} Similarity(fsi, fsj, fm)}{\mid tCA \mid^ 2}
\end{equation}

\begin{equation}
\label{eq:tuplefrequency}
TupleFrequency (ts, tCA) = \frac{occurence(ts, tCA)}{\mid tCA \mid}
\end{equation}
\hrulefill
\end{figure*}

Next we present the three algorithms that we used to assess this comparison framework on the  feature models of the proposed benchmark.



\section{Algorithms Overview}
\label{sec:algorithms}
In this section we briefly describe the three testing algorithms we used in our study.

\subsection{CASA Algorithm}
\label{subsec:casa}
CASA is a simulated annealing algorithm that was designed to generate $n$-wise covering arrays for SPLs  \cite{GCD11}. CASA relies on three nested search strategies. The outermost search performs one-sided narrowing, pruning the potential size of the test suite to be generated by only decreasing the upper bound. The mid-level search performs a binary search for the test suite size. The innermost search strategy is the actual simulated annealing procedure, which tries to find a pairwise test suite of size \texttt{N} for feature model \texttt{FM}. For more details on CASA please refer to \cite{GCD11}.

\subsection{Prioritized Genetic Solver}
\label{subsec:geneticsolver}
The \textit{Prioritized Genetic Solver (PGS)} is an evolutionary approach proposed by Ferrer et al.~\cite{DBLP:conf/gecco/FerrerKCA12} that constructs a  test suite taking into account priorities during the generation. PGS is a constructive genetic algorithm that adds one new product to the partial solution in each iteration until all pairwise combinations are covered. In each iteration the algorithm tries to find the product that adds the most coverage to the partial solution. This paper extends and adapts PGS for SPL testing. PGS has been implemented using jMetal~\cite{Durillo2011760}, a Java framework aimed at the development, experimentation, and study of metaheuristics for solving optimization problems. For further details on PGS, please refer to ~\cite{DBLP:conf/gecco/FerrerKCA12}.


\subsection{ICPL}
\label{subsec:icpl}
ICPL is a greedy approach to generate $n$-wise test suites for SPLs, which has been introduced by Johansen \emph{et al.}~\cite{JHF12}\footnote{ICPL stands for \textit{``ICPL Covering array generation algorithm for Product Lines".}}. 
It is basically an adaptation of Chv\'atal's algorithm to solve the set cover problem. 
First, the set $TS$ of all valid $t$-sets that need to be covered is generated. 
Next, the first feature set (product) $fs$ is computed by greedily selecting a subset of $t$-sets in $TS$ that constitute a valid product in the input feature model 
and added to the (initially empty) test suite $tCA$. Henceforth, all $t$-sets that are covered by product $fs$ are removed from $TS$. 
ICPL then proceeds to generate products and adds them to the test suite $tCA$  until $TS$ is empty, i.e. all valid $t$-sets are covered by at least one product. 
To increase ICPLs performance Johansen \emph{et al.} made several enhancements to the algorithm, for instance they parallelized the data independent processing steps. For further details on ICPL please refer to \cite{JHF12}.



\section{Evaluation}
\label{sec:Evaluation}

In this section we present the evaluation of the benchmark using the comparison framework for the case of pairwise testing. We described the statistical analysis performed and the issues found.
All the data and code used in our analysis is available on \cite{vamos14paperURL}.

\subsection{Experimental Set Up}
\label{subsec:experimentalsetup}

\begin{sloppypar}
The three algorithms, CASA, PGS and ICPL, are non-deterministic. For this reason we performed 30 independent runs for a meaningful statistical analysis. All the executions were run in a cluster of 16 machines with Intel Core2 Quad processors Q9400 (4 cores per processor) at 2.66~GHz and 4~GB memory running Ubuntu 12.04.1 LTS and managed by the HT Condor 7.8.4 cluster manager. Since we have 3 algorithms and \nfp feature models the total number of independent runs is $3 \cdot \nfp \cdot 30 = 1,710$.
Once we obtained the resulting test suites we applied the metrics defined in Section~\ref{sec:framework} and we report summary statistics of these metrics.
In order to check if the differences between the algorithms are statistically significant or just a matter of chance, we  applied the Wilcoxon rank-sum~\cite{Sheskin07} test. In order to properly interpret the results of statistical tests, it is always advisable to report effect size measures. For that purpose, we have also used the non-parametric effect size measure $\hat{A}_{12}$  statistic proposed by Vargha and Delaney~\cite{Vargha00}, as recommended by Arcuri and Briand~\cite{Arcuri12}.
\end{sloppypar}

\subsection{Analysis}
\label{subsec:analysis}

The first thing we noticed is the inadequacy of the  Tuple Frequency metric for our comparison purposes. Its definition, as shown in Equation (\ref{eq:tuplefrequency}), applies on a per tuple (i.e. t-set) basis. In other words, given a test suite, we should compute this metric for each tuple and we should provide the histogram of the tuple frequency. This is what the authors of~\cite{DBLP:journals/sqj/PerrouinOSKBT12} do. This means we should show $1,710$ histograms, one per test suite which evidently is not a viable option to aggregate the information of this metric.
An alternative is also presented in~\cite{DBLP:journals/sqj/PerrouinOSKBT12}. It consists on using the average of the tuple frequencies in a test suite taking into account all the tuples. 
Unfortunately we found that this average says nothing about the test suite.
By using counting arguments we show that this average depends only on the number of features and the number of valid tuples of the model (i.e. the average is the same for all the test suites associated to a feature model), hence it is not suited to assess the quality of the test suite. The proof is presented in the Appendix.
In the following we omit any information related to the Tuple Frequency metric and defer to our future work to study how to aggregate it. 


In order to assess whether there was a correlation between the feature metrics, we calculated the Spearman rank's correlation coefficient for each pair of metrics. Table~\ref{tab:correlations} shows the results obtained plus the correlation values with the number of products and the number of features of the FMs (first two columns and  two rows of the table).

\begin{table}[]
\centering
\renewcommand{\tabcolsep}{0.10cm}
\scriptsize
\caption{Spearman's correlation coefficients of all models and algorithms.} \label{tab:correlations}
\begin{tabular}{|l|r|r|r|r|r|}
\hline
 & Products & Features & TSSize & Performance & Similarity \\
\hline
Products & 1 &	0.495 &	0.717 &	0.169 &	-0.015 \\	
Features & 0.495 &	1 &	0.537 &	0.336 &	0.180 \\	
TSSize & 0.717&	0.537&	1 &	0.280&	-0.106 \\
Performance & 0.169	& 0.336	& 0.280 &	1 &	0.440 \\
Similarity & -0.015	& 0.180 &	-0.106 &	0.440 &	1 \\
\hline	
\end{tabular}
\end{table} 	

We can observe a positive and relatively high correlation among the number of products, features and test suite size. This is somewhat expected because the number of valid products is expected to increase when more features are added to a feature model. In the same sense, more features not only usually imply more combinations of features that must be covered by the test suite, but also usually mean that more test cases must be added.

Regarding performance, we expect the algorithms to take more time to generate the test suites for larger models. The positive correlation between the performance and the three previous size-related measures (products, features and size) supports this idea. However, the value is too low (around 0.3 on average) to clearly claim that larger models require more computation time.

The correlation coefficient between the similarity metric and the other  metrics is low except for the performance where the higher the similarity, the longer time spent in building the test suite. In the case of number of products, the value is rather small (-0.015) to draw any conclusions.
In the case of test suite size, we would expect larger test suites to have higher similarity values (positive correlation with test suite size), but this is not the case in general.
We believe these results might be due to the fact that the similarity metric, as defined in Equation~(\ref{eq:testsuitesimilarity}), only considers selected variant features; however, we would argue that unselected features must also be considered in computing similarity because they are also part of the t-sets, which should be covered by the test suites. It is part of our future work to evaluate alternatives to this metric.


Let us now analyse the metrics results grouped by algorithms, shown in Table~\ref{tab:nonprotable}, for the $\nfp$ feature models. In the table we highlight with dark gray the results that are the best for each metric.
The results of the benchmark models reveal that CASA is the best algorithm regarding the size of the test suite (with a statistically significant difference), whereas for PGS and ICPL the differences in test suite size are not statistically significant.
If we focus on performance time, ICPL is the clear winner followed by CASA. PGS is outperformed by CASA in test suite size and computation time. PGS is also the slowest algorithm, in part because it is not specifically designed to deal with feature models.
Regarding the similarity metric, ICPL is the algorithm providing more dissimilar products (with a statistically significant difference) and CASA is the second one, but with no statistically significant difference with PGS.
This is, as we have mentioned before, somewhat counter-intuitive because
ICPL produces larger test suites than CASA, therefore we would expect that test suites with more products to have more similar products considering that the number of features in the feature models is finite. To elucidate why this is the case is part of our future work.

\begin{table}[!ht]
\centering
\renewcommand{\tabcolsep}{0.15cm}
\scriptsize
\caption{Average of the metrics computed on the test suites generated by the three algorithms for the 19 feature models.} \label{tab:nonprotable}
\begin{tabular}{|l|l|r|r|r|r|r|r|r|r|r|c|c|r|}
\hline
Model &	Algor. &	Size &	Performance &	Similarity  \\ \hline
\multirow{3}{*}{Apache} & CASA & \cellcolor{gray95}6.00 & 566.67 & 0.3635 \\
 & PGS & 8.13 & 27196.40 & 0.3450 \\
 & ICPL & 8.00 & \cellcolor{gray95}189.67 & \cellcolor{gray95}0.3044 \\   \hline
\multirow{3}{*}{argo-uml-spl}  & CASA & \cellcolor{gray95}6.00 & 600.00 & 0.3670 \\
 & PGS & 7.97 & 21167.77 & 0.3611\\
 & ICPL & 8.00 & \cellcolor{gray95}321.53 & \cellcolor{gray95}0.3028 \\   \hline
\multirow{3}{*}{ BerkeleyDB}  & CASA & \cellcolor{gray95}6.00 & 12600.00 & 0.3721 \\
 & PGS & 8.00 & 126936.77 & 0.4478 \\
 & ICPL & 7.00 & \cellcolor{gray95}2027.57 & \cellcolor{gray95}0.2679 \\   \hline
\multirow{3}{*}{ BerkeleyDBF}  & CASA & \cellcolor{gray95}6.00 & 533.33 & 0.3645 \\
 & PGS & 8.10 & 28591.70 & 0.3515 \\
 & ICPL & 7.00 & \cellcolor{gray95}176.77 & \cellcolor{gray95}0.3232 \\   \hline
\multirow{3}{*}{ BerkeleyDBM}  & CASA & \cellcolor{gray95}30.00 & 7333.33 & 0.3482 \\
 & PGS & 30.70 & 249834.47 & 0.3597 \\
 & ICPL & 31.00 & \cellcolor{gray95}554.47 & \cellcolor{gray95}0.2684 \\   \hline
\multirow{3}{*}{ BerkeleyDBP}  & CASA &\cellcolor{gray95} 9.53 & 4133.33 & 0.3826 \\
 & PGS & 11.43 & 65988.83 & 0.3915 \\
 & ICPL & 10.00 & \cellcolor{gray95}366.03 & \cellcolor{gray95}0.3576 \\   \hline
\multirow{3}{*}{ Curl}  & CASA & \cellcolor{gray95}8.00 & 916.67 & 0.3537 \\
 & PGS & 12.13 & 43605.03 & 0.3490 \\
 & ICPL & 12.00 & \cellcolor{gray95}276.03 & \cellcolor{gray95}0.2634 \\   \hline
\multirow{3}{*}{ DesktopSearcher}  & CASA & \cellcolor{gray95}8.37 & 2266.67 & \cellcolor{gray95}0.3785 \\
 & PGS & 9.20 & 32552.70 & 0.3951 \\
 & ICPL & 9.07 & \cellcolor{gray95}412.20 & 0.3895 \\   \hline
\multirow{3}{*}{ fame-dbms-fm}  & CASA & \cellcolor{gray95}10.00 & 1700.00 & 0.3255 \\
 & PGS & 13.80 & 58227.27 & 0.3327 \\
 & ICPL & 11.00 & \cellcolor{gray95}378.00 & \cellcolor{gray95}0.3190 \\   \hline
\multirow{3}{*}{ gpl}  & CASA & \cellcolor{gray95}12.00 & 1966.67 & 0.3494 \\
 & PGS & 13.13 & 62859.50 & 0.3576 \\
 & ICPL & 13.00 & \cellcolor{gray95}317.83 & \cellcolor{gray95}0.3481 \\   \hline
\multirow{3}{*}{ LinkedList}  & CASA & \cellcolor{gray95}12.13 & 2133.33 & 0.4057 \\
 & PGS & 15.43 & 74601.10 & 0.4151 \\
 & ICPL & 14.00 & \cellcolor{gray95}462.53 & \cellcolor{gray95}0.3988 \\   \hline
\multirow{3}{*}{ LLVM}  & CASA & \cellcolor{gray95}6.00 & 633.33 & 0.3653 \\
 & PGS & 8.73 & 32615.13 & 0.3523 \\
 & ICPL & 9.00 & \cellcolor{gray95}221.73 & \cellcolor{gray95}0.2320 \\   \hline
\multirow{3}{*}{ PKJab}  & CASA & \cellcolor{gray95}6.00 & 550.00 & 0.3752 \\
 & PGS & 7.63 & 28318.90 & 0.3726 \\
 & ICPL & 7.00 & \cellcolor{gray95}193.13 &\cellcolor{gray95} 0.3439 \\   \hline
\multirow{3}{*}{ Prevayler}  & CASA & \cellcolor{gray95}6.00 & 550.00 & 0.3610 \\
 & PGS & 6.60 & 26052.00 & 0.3598 \\
 & ICPL & 8.00 & \cellcolor{gray95}156.20 & \cellcolor{gray95}0.2677 \\   \hline
\multirow{3}{*}{ SensorNetwork}  & CASA & \cellcolor{gray95}10.23 & 1583.33 & 0.3719 \\
 & PGS & 12.30 & 63212.33 & 0.3595 \\
 & ICPL & 14.00 & \cellcolor{gray95}445.27 & \cellcolor{gray95}0.3166 \\   \hline
\multirow{3}{*}{ TankWar}  & CASA & \cellcolor{gray95}12.50 & 39200.00 & 0.3483 \\
 & PGS & 14.77 & 152199.10 & 0.3571 \\
 & ICPL & 14.00 & \cellcolor{gray95}612.23 & \cellcolor{gray95}0.3140 \\   \hline
\multirow{3}{*}{ Wget}  & CASA &\cellcolor{gray95} 9.00 & 766.67 & 0.3548 \\
 & PGS & 12.43 & 46869.77 & 0.3541 \\
 & ICPL & 12.00 &\cellcolor{gray95} 290.87 & \cellcolor{gray95}0.2685 \\   \hline
\multirow{3}{*}{ x264}  & CASA & \cellcolor{gray95}16.00 & 2966.67 & 0.3523 \\
 & PGS & 16.97 & 74500.10 & 0.3640 \\
 & ICPL & 17.00 & \cellcolor{gray95}359.53 & \cellcolor{gray95}0.2574 \\   \hline
\multirow{3}{*}{ ZipMe}  & CASA & \cellcolor{gray95}6.00 & 533.33 & 0.3505 \\
 & PGS & 7.43 & 26376.53 & 0.3748 \\
 & ICPL & 7.00 & \cellcolor{gray95}165.40 & \cellcolor{gray95}0.3429 \\   \hline

\end{tabular}
\end{table}

Finally, in Table~\ref{tab:a12algorithms} we show the $\hat{A}_{12}$ statistic to assess the practical significance of the results. Given a performance measure $M$, $\hat{A}_{12}$ measures the probability that running algorithm \emph{A} yields higher $M$ values than running another algorithm \emph{B}. If the results of the two algorithms are equal, then $\hat{A}_{12}=0.5$. If $\hat{A}_{12}=0.3$ entails one would obtain higher values for $M$ with algorithm \emph{A}, 30\% of the times.
In Table~\ref{tab:a12algorithms} we have highlighted the largest distance from 0.5 (equality) per quality indicator, note that 0.5 indicates no difference in the comparison. Recall that we have to highlight two values per metric, because the direction of the comparison does not affect the interpretation of the result, although the value itself is complementary (both adding up to 1).

Regarding size, there is not statistically significant difference between ICPL and PGS, while CASA obtains the best results in more than 66\% of the times.
Regarding performance, ICPL is faster with a higher probability than the other algorithms. ICPL is faster than CASA in 93\% of the times, moreover, ICPL is faster than PGS in 99.95\% of the times. So, ICPL is clearly the best algorithm in performance without any doubts.
Regarding similarity, ICPL is again the algorithm which obtains more dissimilar test suites. It obtains a lower value of similarity than CASA and PGS, in around 85\% of the comparisons.
As we have commented earlier in this section, this results of similarity are somehow unexpected, because smaller test suites ought to be more dissimilar than larger ones. For this reason, CASA would obtain lower values of test suite similarity, but it does not. Again, investigating why  this is the case is part of our future work.
\begin{table}[]
\renewcommand{\tabcolsep}{0.05cm}
\centering
\tiny
\begin{tabular}{|c|c|c|c|c|c|c|c|c|c|c|c|c|c|c|c|c|}
\hline
 & \multicolumn{3}{|c|}{Size} & \multicolumn{3}{|c|}{Performance} & \multicolumn{3}{|c|}{Similarity} \\ \hline
    & CASA     & ICPL &      PGS &     CASA &    ICPL &     PGS & CASA & ICPL & PGS \\ \hline
CASA & - &  0.3312 &  \cellcolor{gray95}0.3194   &  -    &  0.9286  &  0.0109  &   -   &  \cellcolor{gray95}0.8479    & 0.4807  \\ \hline
ICPL  & 0.6688 & - &  0.4653  &  0.0714   &  -   &  \cellcolor{gray95}0.0005 &   \cellcolor{gray95}0.1521   &  -   &  0.1577     \\ \hline
PGS  & \cellcolor{gray95}0.6806  &  0.5347  &   -  &  0.9891    &  \cellcolor{gray95}0.9995   &  -   &  0.5193    &  0.8423   &   -    \\ \hline
\end{tabular}
\caption{$\hat{A}_{12}$ statistical test results.}
\label{tab:a12algorithms}
\end{table}


\section{Related Work}
\label{sec:relatedwork}

There exists substantial literature on SPL testing \cite{DBLP:journals/infsof/EngstromR11, DBLP:journals/infsof/NetoMMAM11, DBLP:conf/splc/LeeKL12,DBLP:journals/sigsoft/MachadoMA12}. However, to the best of our knowledge there are neither benchmarks nor frameworks for comparing approaches. 
In the area of Search-Based Software Engineering a major research focus has been software testing ~\cite{DBLP:journals/csur/HarmanMZ12,DBLP:conf/ssbse/FreitasS11}, where there exists a plethora of articles that compare testing algorithms using different metrics.  For example, Mansour \emph{et al.}~\cite{Mansour2001} compare five algorithms for regression testing using eight different metrics (including quantitative and qualitative criteria). Similarly, Uyar et al.~\cite{Uyar2006} compare different metrics implemented as fitness functions to solve the problem of test input generation.
To the best of our knowledge, in the literature on test case generation there is no well-known comparison framework for the research and practitioner community to use. Researchers usually apply their methods to open source programs and compute some metrics directly such as the success rate, the number of test cases and performance. The closest to a common comparison framework we could trace is the work of Rothermel and Harrold~\cite{Rothermel1994} where they  propose a framework for regression testing.


\section{Conclusions and Future Work}
\label{sec:conclusions}

In this research-in-progress paper, we put forward $\nfp$ feature models as a basis for a benchmark of CIT testing of SPLs. With this benchmark, we made an assessment of the comparison framework proposed by Perrouin \emph{et al.} using three approaches (CASA, PGS and ICPL) for the case of pairwise testing.
Overall the framework helped us identify facts such as that CASA obtains the smallest test suites, while ICPL is the fastest algorithm and also obtains the most dissimilar products. However, we also identified two shortcomings of this framework: \textit{i)} similarity does not consider features that are not selected in a product, a fact that might skew the expected output, and \textit{ii)} tuple frequency is applicable on a per tuple basis only, so its value as an aggregative measure is not clear.
As future work we plan to evaluate other metrics that could be used to complement the framework, for this we will follow the guidelines for metrics selection suggested in \cite{DBLP:journals/tosem/MeneelySW12}. In addition, we expect to integrate more feature models into the benchmark as well as to refine or extend the feature model selection criteria.



\section{Acknowledgements}

This research is partially funded by the Austrian Science Fund (FWF) project P21321-N15 and Lise Meitner Fellowship M1421-N15, the Spanish Ministry of Economy and Competitiveness and FEDER under contract TIN2011-28194.
We thank Martin Johansen and {\O}ystein Haugen for their help with SPLCA tool, Norbert Siegmund for his support with SPLConqueror, and  and B.J. Garvin for his help with CASA.


\section{Appendix}

Let $ocurrence_{p}(ts, fs)$ compute the number of times (i.e. 0 or 1) that $t$-set $ts$ appears in feature set $fs$.  Thus we define $ocurrence (ts, tCA)$ as:
\begin{align*}
ocurrence(ts, tCA) = \sum_{fs \in tCA} ocurrence_p(ts,fs)\nonumber
\end{align*}

\begin{theorem}
The average tuple frequency depends only on the number of features, $|FL|$, and valid $t$-sets, $|TS|$.
\end{theorem}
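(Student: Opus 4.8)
The plan is to compute the sum of tuple frequencies over all valid $t$-sets directly, interchange the order of summation, and observe that everything collapses to a quantity that does not mention the particular covering array $tCA$. First I would write the average tuple frequency of a test suite $tCA$ as
\begin{equation*}
\overline{TF}(tCA) = \frac{1}{|TS|} \sum_{ts \in TS} TupleFrequency(ts, tCA)
   = \frac{1}{|TS| \cdot |tCA|} \sum_{ts \in TS} occurence(ts, tCA).
\end{equation*}
Expanding $occurence(ts, tCA) = \sum_{fs \in tCA} occurence_p(ts, fs)$ and swapping the two sums gives
\begin{equation*}
\overline{TF}(tCA) = \frac{1}{|TS| \cdot |tCA|} \sum_{fs \in tCA} \sum_{ts \in TS} occurence_p(ts, fs).
\end{equation*}

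The key step is then to evaluate the inner sum $\sum_{ts \in TS} occurence_p(ts, fs)$ and show it is a constant independent of $fs$ (and of the test suite). Since $occurence_p(ts, fs)$ is $1$ exactly when $ts$ is covered by $fs$ and $0$ otherwise, this inner sum counts the number of valid $t$-sets covered by the feature set $fs$. Because $fs$ assigns a definite selected/not-selected status to every one of the $|FL|$ features, every one of the $\binom{|FL|}{t}$ choices of $t$ features yields exactly one $t$-set covered by $fs$, and each such $t$-set is automatically valid (it is covered by the valid feature set $fs$, so it lies in $TS$ by the definition of a valid $t$-set). Hence $\sum_{ts \in TS} occurence_p(ts, fs) = \binom{|FL|}{t}$ for every feature set $fs$, regardless of which valid product it is.

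Substituting this back, the sum over $fs \in tCA$ contributes the constant $\binom{|FL|}{t}$ exactly $|tCA|$ times, so the $|tCA|$ factors cancel and we obtain
\begin{equation*}
\overline{TF}(tCA) = \frac{1}{|TS| \cdot |tCA|} \cdot |tCA| \cdot \binom{|FL|}{t} = \frac{\binom{|FL|}{t}}{|TS|},
\end{equation*}
which depends only on $|FL|$ and $|TS|$ (and the fixed $t$), as claimed. The only real subtlety — the step I would be most careful about — is the claim that every $t$-subset of features, read off from a fixed valid feature set $fs$, produces a \emph{valid} $t$-set: this is precisely where the definition ``a $t$-set is valid iff some valid feature set covers it'' is used, with $fs$ itself serving as the witness. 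Everything else is the routine Fubini-style interchange of finite sums and the cancellation of $|tCA|$.
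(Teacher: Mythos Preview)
Your proof is correct and follows essentially the same approach as the paper: expand the average, swap the two finite sums, and observe that the inner sum counts the $t$-sets covered by a fixed valid feature set $fs$, which is always $\binom{|FL|}{t}$ because every choice of $t$ features yields a $t$-set witnessed as valid by $fs$ itself. The only cosmetic difference is that the paper specializes to $t=2$ and writes $\frac{|FL|(|FL|-1)}{2}$ in place of $\binom{|FL|}{t}$, whereas you keep $t$ general.
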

\begin{proof}
We can write the average tuple frequency as:
\begin{align*}
\frac{1}{|TS|}&\sum_{ts \in TS} \frac{ocurrence(ts,tCA)}{|tCA|} \nonumber \\ 
&= \frac{1}{|TS|\cdot |tCA|} \sum_{ts \in TS} ocurrence(ts,tCA) \nonumber \\
&= \frac{1}{|TS|\cdot |tCA|}\sum_{ts \in TS} \sum_{fs \in tCA} ocurrence_p(ts,fs) \nonumber\\
&=  \frac{1}{|TS|\cdot |tCA|}\sum_{fs \in tCA} \left( \sum_{ts \in TS}  ocurrence_p(ts,fs)\right)  \nonumber \\
&= \frac{1}{|TS|\cdot |tCA|}\sum_{fs \in tCA} \frac{|FL|(|FL|-1)}{2}  \nonumber \\
&= \frac{|tCA|}{|TS|\cdot |tCA|} \frac{|FL|(|FL|-1)}{2} \nonumber\\
&= \frac{|FL|(|FL|-1)}{2|TS|}
\end{align*}

Note the expression within the parentheses. This is the number of valid tuples in a feature set $fs$. Let us now select two arbitrary features from $FL$. These features can be both selected in $fs$, both unselected or one selected and the other not. In any case, as the product is a valid product, there exists a valid $t$-set in $TS$ having these two arbitrary features that is covered by $fs$. Thus, the sum within the parentheses is the number of pairs of features: 
\begin{align*}
\sum_{ts \in TS}  ocurrence_p(ts,fs)= \frac{|FL|(|FL|-1)}{2}\nonumber
\end{align*}
\end{proof}




\bibliographystyle{abbrv}
\bibliography{biblio}

\end{document}